\theoremstyle{plain}
\newtheorem{theorem}{Theorem}
\newtheorem{lemma}[theorem]{Lemma}
\newtheorem{proposition}[theorem]{Proposition}
\newtheorem{definition}[theorem]{Definition}
\newtheorem{corollary}[theorem]{Corollary}
\theoremstyle{definition}
\newtheorem{example}{Example}
\newtheorem{construction}{Construction}
\DeclareMathOperator*{\Min}{\text{\upshape Min}}
\newcommand{\supp}{\text{Supp}}
\newcommand{\wt}{\text{wt}}
\renewcommand{\vec}[1]{\boldsymbol{#1}}
\newcommand{\spn}{\text{Span}}
\begin{document}
\title{Bounds and Constructions for Linear Locally Repairable Codes over Binary Fields}

\author{\IEEEauthorblockN{Anyu~Wang\textsuperscript{*},~Zhifang~Zhang\textsuperscript{\dag},~and~Dongdai Lin\textsuperscript{*}}
\IEEEauthorblockA{\fontsize{9.8}{12}\selectfont\textsuperscript{*}State Key Laboratory of Information Security, Institute of Information Engineering, CAS, Beijing, China\\
\textsuperscript{\dag}KLMM, NCMIS, Academy of Mathematics and Systems Science, University of Chinese Academy of Sciences\\
Emails: wanganyu@iie.ac.cn, zfz@amss.ac.cn, ddlin@iie.ac.cn}
}

\maketitle
\thispagestyle{empty}

\begin{abstract} For binary $[n,k,d]$ linear locally repairable codes (LRCs), two new upper bounds on $k$ are derived. The first one applies to LRCs with disjoint local repair groups, for general values of $n,d$ and locality $r$, containing some previously known bounds as special cases. The second one is based on solving an optimization problem and applies to LRCs with arbitrary structure of local repair groups. Particularly, an explicit bound is derived from the second bound when $d\geq 5$. A specific comparison shows this explicit bound outperforms the Cadambe-Mazumdar bound for $5\leq d\leq 8$ and large values of $n$. Moreover, a construction of binary linear LRCs with $d\geq6$ attaining our second bound is provided.

\end{abstract}

\section{Introduction}
Recently, locally repairable codes (LRCs) have attracted a lot of attention due to their applications in distributed storage systems.
An $[n,k,d]$ linear code is called an LRC with locality $r$ if the value at each coordinate can be recovered by accessing at most $r$ other coordinates.
An LRC with small locality $r$ is preferred in practice as it greatly reduces the disk I/O complexity in repairing node failures. Meantime,  large values of $k$ and  $d$ are also desirable to ensure high level of storage efficiency and global fault tolerance ability respectively.
Much work has been done toward exploring the relationship between the parameters $n,k,d,r$.
The first trade-off is derived in \cite{gopalan2012locality}, i.e.,
\begin{equation}\label{eqSingleton}
  d \le n - k - \bigl \lceil \frac{k}{r} \bigr\rceil +2,
\end{equation}
which is also known as the Singleton-like bound for LRCs.
Then various methods are developed to construct LRCs attaining the bound \eqref{eqSingleton}, e.g., \cite{tamo2013optimal,silberstein2013optimal,papailiopoulos2012locally,tamo2014family}.
Tightness of the singleton-like bound is discussed in \cite{song2014optimal,hao2016bounds}, and some
improved  bounds are  derived in \cite{prakash2014codes,wang2015integer}.

It can be seen the bound \eqref{eqSingleton} does not care about the field size. However, in practice LRCs over small finite fields, especially those over binary fields, are preferred due to their convenience in implementation.
The first trade-off taking into consideration the field size is  derived by Cadambe and Mazumdar \cite{cadambe2015bounds}, i.e.,
\begin{equation}\label{eqCM}
  k \le \Min_{t \in \mathbb{Z}^+} \bigl[ tr + k^{(q)}_{opt}(n-(r+1)t,d) \bigr],
\end{equation}
where $k^{(q)}_{opt}(n,d)$ is the largest possible dimension of an $[n,k,d]$ linear code over $\mathbb{F}_q$.
This trade-off is usually called the C-M bound, and is proved achievable by the binary Simplex codes \cite{cadambe2015bounds}.
Another class of binary LRCs with $r=2,3$ constructed via anticodes in \cite{silberstein2015optimalbinary} also meets \eqref{eqCM} with equality.
Although these codes are optimal with respect to the C-M bound, their code length $n$ increases exponentially as the dimension $k$ grows, implying poor performance in information rate. Alternatively, cyclic codes provides more desirable candidates for LRCs over small fields.
In \cite{goparaju2014binary}, binary LRCs with $r=2$ and $d=2,6,10$ are constructed from primitive cyclic codes.
These codes do not attain the C-M bound, but are shown to be optimal under a structural assumption that the codeword coordinates are divided into disjoint local repair groups.
The same method is adopted in \cite{zeh2015optimal} to generate binary LRCs with $r=2, d=10$ from nonprimitive cyclic codes.
In \cite{tamo2015cyclic}, BCH-type binary LRCs are constructed as the subfield subcodes of optimal Reed-Solomon-Type LRCs.
Besides, some other approaches for constructing LRCs that attain the Singleton-like bound \eqref{eqSingleton} over small fields are also developed in \cite{shahabinejad2016class,hao2016some}.

Recently, two upper bounds taking the field size into account are derived in \cite{hu2016combinatorial} for $(r,\delta)$-LRCs.
Since $(r,\delta)$-LRCs \cite{prakash2012optimal} contain LRCs as the special case of $\delta=2$, these two bounds apply to LRCs as well.
For $\delta =2$, the first bound is equivalent to the Singleton-like bound \eqref{eqSingleton}, while the second one is a linear programming bound for LRCs with disjoint repair groups.
On the other hand, asymptotic bounds on the parameters of LRCs are studied in \cite{tamo2016bounds,agarwal2016bounds}.

Overall, most of the bounds derived so far for LRCs over particular finite fields either depend on undetermined parameters in coding theory, e.g., the C-M bound, or rely on solving optimization problems under concrete code parameters, e.g., the LP bound in \cite{hu2016combinatorial}.
And the constructions of binary LRCs with good parameters mostly restrict to specific values of $d$ and $r$. Much work remains undone for LRCs over particular finite fields. In this work, we focus on linear LRCs over binary fields.

\subsection{Main Idea and Contribution}
For any $[n,k,d]$ binary linear LRC $\mathcal{C}$, our main idea for deriving upper bounds on $k$ is to consider a related sphere packing problem in a particular space, namely, the $\mathcal{L}$-space.
Specifically, an $\mathcal{L}$-space of $\mathcal{C}$ is defined to be the dual of the linear space spanned by a minimum set of local parity checks with overall supports covering all coordinates.
Actually, the $\mathcal{L}$-space can be viewed as an LRC which contains $\mathcal{C}$ as a subcode.
Then by applying the sphere packing bound in the $\mathcal{L}$-space, two upper bounds for $\mathcal{C}$ are derived.

Firstly, assuming the code $\mathcal{C}$ has disjoint local repair groups, we get an explicit bound (i.e. Corollary \ref{corDisjoint}) on $k$ for general values of $n,d,r$. Note that for $r=2$ and special forms of $n,d$, upper bounds were also derived in \cite{goparaju2014binary,zeh2015optimal}. It turns out our bound contains their results as special cases.

Secondly, for linear binary LRCs with arbitrary local repair groups, we derive an upper bound (i.e. Theorem \ref{thmSPB}) on $k$ based on solving an optimization problem. Although it is generally difficult to solve this optimization problem, simplification can be done for $d\geq 5$, and thus an explicit upper bound (i.e. Theorem \ref{thm5}) is derived. Through a specific comparison, we show this bound can outperform the C-M bound for $5\leq d\leq 8$ and large values of $n$. Moreover, a class of binary linear LRCs with $d\geq6$ attaining this explicit bound is constructed.

\subsection{Organization}
Section II defines the $\mathcal{L}$-space, and derives our first  bound (i.e. Corollary \ref{corDisjoint}). Section III presents our second bound (Theorem \ref{thm5}). Section IV gives the construction attaining our second bound. Section V concludes the paper.

\section{The $\mathcal{L}$-Space for LRCs}
For any vector $\vec{v} = (v_1,\dots,v_n) \in \mathbb{F}_2^n$, let $\supp{(\vec{v})} = \{i \in [n]: v_i \neq 0\}$ and $\wt{(\vec{v})} = |\supp{(\vec{v})}|$, where $[n]=\{1, 2, \dots, n\}$.
For any two vectors $\vec{u}, \vec{v} \in \mathbb{F}_2^n$, $\text{dist}(\vec{u},\vec{v})$ denotes the hamming distance of $\vec{u}$ and $\vec{v}$.
Denote by $\spn_{2}(\vec{u}_1,\dots,\vec{u}_l)$  the linear space spanned by a set of vectors $\{\vec{u}_1,\dots,\vec{u}_l\}$ over $\mathbb{F}_2$.

Let $\mathcal{C}$ be an $[n,k,d]$ binary linear LRC with locality $r$.
Then for each coordinate $i \in [n]$, there is a local parity check $\vec{h}_i \in \mathcal{C}^\bot$ such that $i \in \supp(\vec{h}_i)$ and $\wt(\vec{h}_i) \le r+1$.
Note by local parity checks we mean the codewords in the dual code $\mathcal{C}^\bot$ with weight at most $r+1$.

\begin{definition}
Let $H \subseteq \{\vec{h}_{1},\dots,\vec{h}_{n}\}$ be a set of local parity checks of $\mathcal{C}$ such that $\bigcup_{\vec{h} \in H} \supp{(\vec{h})} = [n]$, and $\bigcup_{\vec{h} \in H'} \supp{(\vec{h})} \neq [n]$ for any $H' \subsetneq H$. We call $H$ an $\mathcal{L}$-cover of $\mathcal{C}$. Denote $\mathcal{H}=\spn_{2}(H)$, then the dual space of $\mathcal{H}$, i.e., $\mathcal{V}=\{\vec{v}\in \mathbb{F}_2^n\mid \vec{v}\cdot\vec{h}=0, ~\forall \vec{h}\in \mathcal{H}\}$, is called an $\mathcal{L}$-space of $\mathcal{C}$.
\end{definition}

Obviously, an $\mathcal{L}$-cover $H$ contains the minimum number of local parity checks guaranteeing the locality $r$ for all coordinates. $H$ needs not be unique, neither does the $\mathcal{L}$-space $\mathcal{V}$. Our proofs in this paper only depend on their existence which is ensured by the definition of LRCs. Since $H$ only contains partial parity checks of $\mathcal{C}$, it follows that $\mathcal{V}$ also defines an LRC containing $\mathcal{C}$ as a subcode. Investigating the structure of $\mathcal{V}$ may help us to study the code $\mathcal{C}$. In the following, we use the sphere-packing bound in the $\mathcal{L}$-space $\mathcal{V}$, and obtain a connection between $k,d$ and $\mathcal{V}$.

\begin{proposition}\label{propLSB}
For an $[n,k,d]$ binary LRC $\mathcal{C}$ with locality $r$, it holds
\begin{equation}\label{eqLSB}
    k \le \dim(\mathcal{V}) - \log_2 \bigl(B_{\mathcal{V}}(\bigl\lfloor \frac{d-1}{2} \bigr\rfloor) \bigr),
\end{equation}
where $B_{\mathcal{V}}(\bigl\lfloor \frac{d-1}{2} \bigr\rfloor) = \left| \{ \vec{v} \in \mathcal{V} : \wt(\vec{v}) \le \bigl\lfloor \frac{d-1}{2} \bigr\rfloor\} \right|$, and $\mathcal{V}$ is an $\mathcal{L}$-space of $\mathcal{C}$.
\end{proposition}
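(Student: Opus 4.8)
The plan is to exploit the fact, already observed after the definition, that the $\mathcal{L}$-space $\mathcal{V}$ is itself an LRC containing $\mathcal{C}$ as a subcode. Since $\mathcal{V}=\mathcal{H}^\bot$ and $\mathcal{H}\subseteq\mathcal{C}^\bot$, we have $\mathcal{C}=(\mathcal{C}^\bot)^\bot\subseteq\mathcal{H}^\bot=\mathcal{V}$, so $\mathcal{C}$ is a $k$-dimensional subspace of $\mathcal{V}$. Crucially, $\mathcal{C}$ still has minimum distance at least $d$ as a code, so viewed inside $\mathcal{V}$ its $2^k$ codewords are pairwise at Hamming distance $\ge d$. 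The idea is then to run the classical Hamming/sphere-packing argument, but counting only inside $\mathcal{V}$ rather than all of $\mathbb{F}_2^n$.

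First I would set $\rho=\lfloor (d-1)/2\rfloor$ and, for each codeword $\vec{c}\in\mathcal{C}$, consider the ``ball'' $\mathcal{B}(\vec{c})=\{\vec{v}\in\mathcal{V}:\text{dist}(\vec{v},\vec{c})\le\rho\}$ of radius $\rho$ around $\vec{c}$ taken within $\mathcal{V}$. Two points to nail down: (i) these balls are pairwise disjoint, and (ii) each has the same cardinality $B_{\mathcal{V}}(\rho)$. For (i), if $\vec{v}\in\mathcal{B}(\vec{c})\cap\mathcal{B}(\vec{c}')$ with $\vec{c}\ne\vec{c}'$, the triangle inequality gives $\text{dist}(\vec{c},\vec{c}')\le 2\rho\le d-1<d$, contradicting $d(\mathcal{C})\ge d$; hence $\vec{c}=\vec{c}'$. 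For (ii), because $\mathcal{V}$ is a linear (hence translation-invariant under its own elements) code and $\vec{c}\in\mathcal{C}\subseteq\mathcal{V}$, the translation map $\vec{v}\mapsto\vec{v}-\vec{c}$ is a bijection of $\mathcal{V}$ onto itself that carries $\mathcal{B}(\vec{c})$ onto $\{\vec{v}\in\mathcal{V}:\wt(\vec{v})\le\rho\}$, whose size is exactly $B_{\mathcal{V}}(\rho)$ by definition. This is the step I expect to be the only real subtlety: one must use that $\mathcal{C}\subseteq\mathcal{V}$ (so the centres lie in $\mathcal{V}$) to legitimately translate within $\mathcal{V}$; it would fail for a radius argument around points outside $\mathcal{V}$.

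Combining (i) and (ii), the disjoint union $\bigsqcup_{\vec{c}\in\mathcal{C}}\mathcal{B}(\vec{c})$ sits inside $\mathcal{V}$, so
\[
  2^k\cdot B_{\mathcal{V}}(\rho)\;=\;\sum_{\vec{c}\in\mathcal{C}}|\mathcal{B}(\vec{c})|\;\le\;|\mathcal{V}|\;=\;2^{\dim(\mathcal{V})}.
\]
Taking $\log_2$ of both sides yields $k+\log_2\!\bigl(B_{\mathcal{V}}(\rho)\bigr)\le\dim(\mathcal{V})$, i.e. the claimed inequality \eqref{eqLSB}. No optimisation or appeal to the locality parameter $r$ is needed beyond what guarantees the existence of an $\mathcal{L}$-space; the bound holds for any fixed choice of $\mathcal{V}$, and in particular for the best one. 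I would close by remarking that this is just the Hamming bound applied internally to $\mathcal{V}$, which is why the ambient dimension $n$ is replaced by $\dim(\mathcal{V})$ and the volume of the Hamming ball in $\mathbb{F}_2^n$ is replaced by the number of low-weight vectors that actually lie in $\mathcal{V}$.
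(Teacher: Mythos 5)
Your proposal is correct and follows essentially the same sphere-packing-in-$\mathcal{V}$ argument as the paper's own proof: disjoint balls of radius $\lfloor(d-1)/2\rfloor$ around codewords of $\mathcal{C}$, with equal volumes $B_{\mathcal{V}}(\lfloor\frac{d-1}{2}\rfloor)$ justified by translation within $\mathcal{V}$ using $\mathcal{C}\subseteq\mathcal{V}$. Your write-up merely makes explicit two details the paper states in passing (the inclusion $\mathcal{C}\subseteq\mathcal{V}$ and the translation-invariance step).
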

\begin{proof}
For any codeword $\vec{c} \in \mathcal{C}$, consider the ball of radius $\left\lfloor \frac{d-1}{2} \right\rfloor$ around $\vec{c}$  in $\mathcal{V}$.
Since $\mathcal{C}$ has minimum distance $d$, then these balls are non-overlapping.
It follows that
\begin{equation}\label{eqSp0}
  \sum_{\vec{c} \in \mathcal{C}} B_{\mathcal{V}}(\vec{c},\bigl\lfloor \frac{d-1}{2} \bigr\rfloor) \le |\mathcal{V}|,
\end{equation}
where $B_{\mathcal{V}}(\vec{c}, \bigl\lfloor \frac{d-1}{2} \bigr\rfloor) = \left| \{\vec{v} \in \mathcal{V}: \text{dist}(\vec{v},\vec{c}) \le \left\lfloor \frac{d-1}{2} \right\rfloor \} \right|$.
Note that $\mathcal{C} \subseteq \mathcal{V}$, so we have $B_{\mathcal{V}}(\vec{c},\bigl\lfloor \frac{d-1}{2} \bigr\rfloor)= B_{\mathcal{V}}(\bigl\lfloor \frac{d-1}{2} \bigr\rfloor), \forall \vec{c} \in \mathcal{C}$.
Therefore \eqref{eqSp0} can be written as
\begin{equation}\label{eqSp1}
  |\mathcal{C}| \cdot B_{\mathcal{V}}( \bigl\lfloor \frac{d-1}{2} \bigr\rfloor) \le |\mathcal{V}|.
\end{equation}
Since $\log_2 |\mathcal{C}| = k$ and $ \log_2 |\mathcal{V}| = \dim(\mathcal{V})$, then the Theorem follows directly from \eqref{eqSp1}.
\end{proof}

The right hand side of \eqref{eqLSB} depends on the locality space $\mathcal{V}$,
so explicit bound can be derived from \eqref{eqLSB} if $\mathcal{V}$ is known.
In the following, we apply Proposition \ref{propLSB} to a special class of binary linear LRCs which has a clear $\mathcal{L}$-space.

\subsection{Bound for LRCs with Disjoint Local Repair Groups}
Assume $\mathcal{C}$ has local parity checks $\vec{h}_{i_1},\vec{h}_{i_2},\dots,\vec{h}_{i_l} \in \mathcal{C}^\bot$ satisfying $\cup_{j=1}^l \supp(\vec{h}_{i_j}) = [n]$, $\wt(\vec{h}_{i_j}) = r+1$ and $\supp(\vec{h}_{i_j}) \cap \supp(\vec{h}_{i_{j'}}) = \emptyset$ for $1 \le j \ne j' \le l$. Obviously, $r+1\mid n$ and $l=\frac{n}{r+1}$.
Such an LRC is usually said to have  \emph{disjoint local repair groups}, which is widely adopted in constructions of LRCs, e.g., \cite{papailiopoulos2012locally,tamo2014family,silberstein2013optimal,tamo2013optimal}.
Under this assumption, the structure of the $\mathcal{L}$-space $\mathcal{V}$ becomes quite simple. Then based on Proposition \ref{propLSB}, we derive the following upper bound.

\begin{corollary}\label{corDisjoint}
For any $[n,k,d]$ binary LRC $\mathcal{C}$ with locality $r$ that has disjoint local repair groups, it holds
\begin{equation}\label{eqSpbPre}
k \le \frac{rn}{r+1} - \log_2 \bigl( \sum_{0 \le i_1 +\dots+i_l \le \lfloor \frac{d-1}{4}\rfloor  } \prod_{j=1}^l \binom{r+1}{2 i_j} \bigr)\;.
\end{equation}
\end{corollary}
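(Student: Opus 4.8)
The plan is to apply Proposition~\ref{propLSB} to the particular $\mathcal{L}$-space that the disjoint repair groups hand us. First I would take $H=\{\vec{h}_{i_1},\dots,\vec{h}_{i_l}\}$, the given weight-$(r+1)$ local parity checks whose supports partition $[n]$, and check that $H$ is an $\mathcal{L}$-cover: it covers $[n]$ by hypothesis, and it is minimal because the supports are pairwise disjoint, so deleting any $\vec{h}_{i_j}$ leaves the coordinates in $\supp(\vec{h}_{i_j})$ uncovered. Set $\mathcal{H}=\spn_2(H)$ and let $\mathcal{V}=\mathcal{H}^\bot$ be the associated $\mathcal{L}$-space. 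Since Proposition~\ref{propLSB} holds for any $\mathcal{L}$-space, it applies to this $\mathcal{V}$.

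Next I would compute $\dim(\mathcal{V})$. The vectors $\vec{h}_{i_1},\dots,\vec{h}_{i_l}$ have pairwise disjoint nonempty supports, so they are linearly independent over $\mathbb{F}_2$; hence $\dim(\mathcal{H})=l=\frac{n}{r+1}$ and $\dim(\mathcal{V})=n-\frac{n}{r+1}=\frac{rn}{r+1}$, which is the leading term of \eqref{eqSpbPre}.

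The core step is to identify $\mathcal{V}$ explicitly. Since $\wt(\vec{h}_{i_j})=r+1$, each $\vec{h}_{i_j}$ is the all-ones vector on the block $B_j:=\supp(\vec{h}_{i_j})$, and the $B_j$ partition $[n]$. Thus the condition $\vec{v}\cdot\vec{h}_{i_j}=0$ is exactly the requirement that the restriction of $\vec{v}$ to $B_j$ have even weight, and these $l$ conditions act on disjoint coordinate sets. Consequently $\mathcal{V}$ decomposes as the direct product, over the $l$ blocks, of the even-weight code of length $r+1$. A vector $\vec{v}\in\mathcal{V}$ of weight $w$ is therefore obtained by choosing in each block $B_j$ a support of even size $2i_j$ (contributing $\binom{r+1}{2i_j}$ choices) subject to $\sum_{j=1}^{l}2i_j=w$. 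Summing over all $w\le\bigl\lfloor\frac{d-1}{2}\bigr\rfloor$ and using the elementary identity $\bigl\lfloor\frac{1}{2}\bigl\lfloor\frac{d-1}{2}\bigr\rfloor\bigr\rfloor=\bigl\lfloor\frac{d-1}{4}\bigr\rfloor$ to rewrite $2(i_1+\dots+i_l)\le\bigl\lfloor\frac{d-1}{2}\bigr\rfloor$ as $i_1+\dots+i_l\le\bigl\lfloor\frac{d-1}{4}\bigr\rfloor$, I obtain
\[
B_{\mathcal{V}}\Bigl(\bigl\lfloor\tfrac{d-1}{2}\bigr\rfloor\Bigr)=\sum_{0\le i_1+\dots+i_l\le\lfloor\frac{d-1}{4}\rfloor}\ \prod_{j=1}^{l}\binom{r+1}{2i_j}.
\]
Substituting $\dim(\mathcal{V})=\frac{rn}{r+1}$ and this count into \eqref{eqLSB} yields \eqref{eqSpbPre}.

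The only step I expect to require genuine care is the product decomposition of $\mathcal{V}$: one must argue cleanly that the orthogonality constraints decouple block by block (using disjointness of the supports) and that on each block the single constraint is precisely ``even weight'' (using $\wt(\vec{h}_{i_j})=r+1$, so that $\vec{h}_{i_j}$ is all-ones there). The independence of the $\vec{h}_{i_j}$, the resulting dimension count, and the floor manipulation are all routine.
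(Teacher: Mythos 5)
Your proof is correct, and it reaches the key quantity $B_{\mathcal{V}}\bigl(\lfloor\frac{d-1}{2}\rfloor\bigr)$ by a different route than the paper. The paper computes the weight enumerator of $\mathcal{V}$ indirectly: it observes that $\spn_2(H)$ has weight enumerator $(x^{r+1}+y^{r+1})^l$ and then applies the MacWilliams identity to obtain $W_{\mathcal{V}}(x,y)=\bigl(\sum_{i\ge 0}\binom{r+1}{2i}x^{r+1-2i}y^{2i}\bigr)^l$, from which the coefficients $A_u$ are read off. You instead identify $\mathcal{V}$ explicitly as the direct product over the $l$ blocks of the even-weight code of length $r+1$ (using that each $\vec{h}_{i_j}$ is the all-ones vector on its block and that the blocks are disjoint, so the parity constraints decouple), and then count low-weight vectors directly. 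The two computations give the same weight distribution, as they must; your version is more elementary and self-contained in that it avoids invoking MacWilliams, while the paper's version is the one that generalizes painlessly when the blocks have different sizes $r_j+1$ (as used later for the shortened code $\mathcal{C}'$ in Section~III) --- though your product decomposition adapts just as easily to that case. Your handling of the $\mathcal{L}$-cover verification, the dimension count via linear independence of disjointly supported vectors, and the floor identity $\bigl\lfloor\frac{1}{2}\bigl\lfloor\frac{d-1}{2}\bigr\rfloor\bigr\rfloor=\bigl\lfloor\frac{d-1}{4}\bigr\rfloor$ are all correct.
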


\begin{proof}
  Note that $H=\{\vec{h}_{i_1},\vec{h}_{i_2},\dots,\vec{h}_{i_l}\}$ is an $\mathcal{L}$-cover of $\mathcal{C}$, then $\mathcal{V} = \spn_{2}(H)^\bot$ is an $\mathcal{L}$-space of $\mathcal{C}$.
  By Proposition~\ref{propLSB}, it suffices to determine $\dim(\mathcal{V})$ and $B_{\mathcal{V}}(\bigl\lfloor \frac{d-1}{2} \bigr\rfloor)$.
  Clearly it has $\dim(\mathcal{V}) =\frac{rn}{r+1}$.
  Note that the linear space $\spn_{2}(H)$ has weight enumerator polynomial $W_{H}(x,y) = (x^{r+1} + y^{r+1})^l$.
  Then by the MacWilliams equality, (see e.g., \cite{macwilliams1977theory}), the weight enumerator polynomial of $\mathcal{V}$ is
  \begin{align*}
    W_{\mathcal{V}}(x,y) & = \frac{1}{2^l} W_{H}(x+y,x-y) \\
    & = \frac{1}{2^l} ((x+y)^{r+1} + (x-y)^{r+1})^l \\\
    & = \bigl( \sum_{i \ge 0} \binom{r+1}{2 i}x^{r+1-2i}y^{2i} \bigr)^l \\
    & = \sum_{0\le u \le \frac{n}{2}} A_{u} x^{n-2u} y^{2u},
  \end{align*}
  where $A_{u} = \sum\limits_{i_1+\dots+i_l = u} \prod_{j=1}^l \binom{r+1}{2 i_j}$.
  Thus we have
  \begin{align*}
    B_{\mathcal{V}}(\bigl\lfloor \frac{d-1}{2} \bigr\rfloor) & = A_0 + \dots + A_{\lfloor\frac{d-1}{4}\rfloor} \\
    & = \sum_{0 \le i_1 +\dots+i_l \le \lfloor \frac{d-1}{4}\rfloor  } \prod_{j=1}^l \binom{r+1}{2 i_j} \bigr),
  \end{align*}
  and \eqref{eqSpbPre} follows directly.
\end{proof}

The sphere packing approach was also used in \cite{goparaju2014binary,zeh2015optimal} to derive upper bounds on $k$ for binary linear LRCs with disjoint local repair groups.
However, their approach only applies to the case of $r=2$ because it relies on a map from binary linear LRCs with $r=2$ to additive $\mathbb{F}_4$-codes. Our bound works for general values of $n,d,r$, especially containing the bounds in
\cite{goparaju2014binary,zeh2015optimal} as special cases.
For example, suppose $n=2^m-1, d=6$ and $r=2$, then Corollary \ref{corDisjoint} implies that
\begin{align*}
  k & \le \frac{2n}{3} - \log_2 \bigl(\binom{r+1}{0} + l \binom{r+1}{2} \bigr) \\
  & = \frac{2n}{3} - \log_2 (1+n) \\
  &= \frac{2}{3} (2^m-1) -m,
\end{align*}
which coincides with the Theorem 1 in \cite{goparaju2014binary}.

Another bound for LRCs with disjoint repair groups is the LP bound derived in \cite{hu2016combinatorial}.
Table \ref{Table0} lists a comparison of the bound \eqref{eqPreBnd}, the LP bound in \cite{hu2016combinatorial} and the C-M bound \eqref{eqCM} for $3 \le r \le 10, \frac{n}{r+1}=3, d=5$.
From the table we can see the bound \eqref{eqSpbPre} is slightly weaker than the LP bound but tighter than the C-M bound \eqref{eqCM}.
Nevertheless, the bound \eqref{eqSpbPre} has an explicit form and can be more easily implemented than the other two bounds.

\renewcommand{\arraystretch}{1.2}
\begin{table}[!htb]
\centering
\begin{tabular}[b]{|c|c|c|c|c|c|c|c|c|c|}
\hline
$r$ & $3$ & $4$ & $5$ & $6$ & $7$ & $8$ & $9$ & $10$ \\ \hline
Our bound \eqref{eqSpbPre} & $4$ & $7$ & $9$ & $12$ & $14$ & $17$ & $19$ & $22$ \\
The C-M bound \eqref{eqCM} & $5$ & $7$ & $10$ & $13$ & $15$ & $18$ & $21$ & $23$ \\
The LP bound \cite{hu2016combinatorial}  & $4$ & $6$ & $9$ & $11$ & $14$ & $17$ & $19$ & $22$ \\
\hline
\end{tabular}
\caption{}
\label{Table0}
\end{table}
\renewcommand{\arraystretch}{1}

\section{New Upper Bound for Binary Linear LRCs}\label{secNB}
In this section we will remove the assumption of disjoint local repair groups, and derive parameter bounds for linear binary LRCs with arbitrary local repair groups.
Suppose $\mathcal{C}$ is an $[n,k,d]$ binary linear LRC with locality $r$.
Let $H = \{\vec{h}_{i_1},\dots,\vec{h}_{i_l}\} \subseteq \mathcal{C}^\bot$ be an $\mathcal{L}$-cover of $\mathcal{C}$.
By shortening at the coordinates that appear more than once in the supports of $\vec{h}_{i_1},\dots,\vec{h}_{i_l}$, we can derive from $\mathcal{C}$ a shortened code $\mathcal{C'}$ which has disjoint local repair groups. Thus the problem is reduced to that we discussed in last section.

Specifically, define $L \in \mathbb{F}_2^{l\times n}$ to be a matrix whose rows are the $l$  local parity checks in $H$.
Denote by $N$ the number of columns in $L$ that have weight $1$. Let $L'$ be the matrix obtained from $L$ by deleting the $n-N$ columns of $L$ that have weight greater than $1$. Then taking $L'$ as the parity check matrix defines a binary code $\mathcal{C}'$. It can be proved $\mathcal{C}'$ has the following properties.

\begin{lemma}\label{lemShorten}
  The shortened code $\mathcal{C}'$ is an $[N,K,D]$ binary linear LRC satisfying
  \begin{itemize}
    \item[(i)] $n \ge N \ge 2n- l(r+1)$, $K \ge  N- (n-k) $, $D \ge d$;
    \item[(ii)] $\mathcal{C}'$ has an $\mathcal{L}$-cover  $H'=\{\vec{h}'_{i_1}, \dots,\vec{h}'_{i_l}\}$  such that $1 \le \wt(\vec{h}'_{i_j}) \le  r+1$ and $\supp(\vec{h}'_{i_j})\cap \supp(\vec{h}'_{i_{j'}}) = \emptyset$ for all $1 \le j \neq j' \le n$.
  \end{itemize}
\end{lemma}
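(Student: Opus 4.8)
The plan is to realise $\mathcal{C}'$ explicitly as a shortening of $\mathcal{C}$ and then verify the listed assertions one at a time. First I would fix notation: let $S\subseteq[n]$ be the set of coordinates at which $L$ has column weight strictly larger than $1$ -- equivalently, the coordinates lying in the support of at least two of $\vec{h}_{i_1},\dots,\vec{h}_{i_l}$. Then $|S|=n-N$, and since $H$ is an $\mathcal{L}$-cover we have $\bigcup_{j}\supp(\vec{h}_{i_j})=[n]$, so every coordinate of $[n]\setminus S$ lies in the support of exactly one $\vec{h}_{i_j}$. I would then take $\mathcal{C}'$ to be the length-$N$ code obtained from $\mathcal{C}$ by shortening at $S$, i.e.\ the projection onto $[n]\setminus S$ of $\{\vec{c}\in\mathcal{C}:\supp(\vec{c})\cap S=\emptyset\}$; the rows of $L'$ then lie in $(\mathcal{C}')^\bot$, because for every row $\vec{h}_{i_j}$ of $L$ and every such $\vec{c}$ one has $\vec{h}_{i_j}|_{[n]\setminus S}\cdot\vec{c}|_{[n]\setminus S}=\vec{h}_{i_j}\cdot\vec{c}=0$.

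For part (i): $n\ge N$ is immediate. For $N\ge 2n-l(r+1)$ I would count the nonzero entries of $L$ in two ways: summing over rows gives $\sum_{j=1}^l\wt(\vec{h}_{i_j})\le l(r+1)$, while summing over columns each of the $N$ weight-$1$ columns contributes $1$ and each of the remaining $n-N$ columns contributes at least $2$, so $l(r+1)\ge N+2(n-N)=2n-N$. The bound $K\ge N-(n-k)$ is the standard fact that shortening at a coordinate set of size $n-N$ lowers the dimension by at most $n-N$. The bound $D\ge d$ is the standard fact that a nonzero codeword of the shortened code is the restriction of a nonzero codeword of $\mathcal{C}$ that vanishes on $S$, hence has weight at least $d$.

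For part (ii): I would set $\vec{h}'_{i_j}:=\vec{h}_{i_j}|_{[n]\setminus S}$ for $1\le j\le l$. Each $\vec{h}'_{i_j}$ lies in $(\mathcal{C}')^\bot$ by the computation above and satisfies $\wt(\vec{h}'_{i_j})\le\wt(\vec{h}_{i_j})\le r+1$; moreover $\supp(\vec{h}'_{i_j})=\supp(\vec{h}_{i_j})\setminus S$, and since each coordinate outside $S$ lies in exactly one $\supp(\vec{h}_{i_j})$, the supports of the $\vec{h}'_{i_j}$ are pairwise disjoint with union $[n]\setminus S$, which I identify with $[N]$. It remains to show $\wt(\vec{h}'_{i_j})\ge1$ and that $H'=\{\vec{h}'_{i_1},\dots,\vec{h}'_{i_l}\}$ is a minimal cover; I would obtain both from the minimality clause in the definition of an $\mathcal{L}$-cover: if $\supp(\vec{h}_{i_j})\subseteq S$ then every coordinate of $\vec{h}_{i_j}$ is covered by some other member of $H$, so $H\setminus\{\vec{h}_{i_j}\}$ would still cover $[n]$, contradicting minimality; hence $\wt(\vec{h}'_{i_j})\ge1$, and since the $\supp(\vec{h}'_{i_j})$ are disjoint and nonempty, deleting any one of them leaves a coordinate uncovered, so $H'$ is an $\mathcal{L}$-cover of $\mathcal{C}'$. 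These disjoint covering parity checks of weight at most $r+1$ exhibit $\mathcal{C}'$ as a binary linear LRC with locality $r$, completing (ii).

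I expect the only genuinely delicate step to be the argument in part (ii) that each shortened check $\vec{h}'_{i_j}$ remains nonzero, since this is exactly where the minimality built into the definition of an $\mathcal{L}$-cover is needed: if some local check were entirely supported on the deleted coordinates, both the claim $\wt(\vec{h}'_{i_j})\ge1$ and the disjoint-repair-group structure would fail. The incidence count and the two shortening facts in part (i) are routine.
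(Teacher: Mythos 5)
Your proof is correct and follows essentially the same route as the paper's: the same double count of the ones in $L$ for the bound on $N$, the standard shortening facts for $K$ and $D$, and the restricted checks $\vec{h}_{i_j}|_{[n]\setminus S}$ as the disjoint cover in part (ii). You go slightly further than the paper in two places, both correctly and usefully: you make explicit that $\mathcal{C}'$ must be read as the shortening of $\mathcal{C}$ at $S$ (rather than literally as the kernel of $L'$) for the $K$ and $D$ bounds to apply, and you supply the minimality-of-the-$\mathcal{L}$-cover argument showing each $\vec{h}'_{i_j}$ is nonzero, a step the paper dismisses as ``clearly'' even though it is exactly where the minimality clause in the definition is needed.
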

\begin{proof}
  Since the shortening operation neither increases the redundancy nor decreases the minimum distance, (see e.g., \cite{macwilliams1977theory}), then it has $K \ge  N- (n-k) $ and $D \ge d$.
  To show the other statements, we suppose without loss of generality that
  \begin{equation*}
    L = \bigl( L', \;\; L''  \bigr),
  \end{equation*}
  where $L' \in \mathbb{F}_2^{ l \times N}$ consists of the $N$ columns that have weight $1$, and $L'' \in \mathbb{F}_2^{ l \times (n-N)}$ consists of the other $(n-N)$ columns that have weight $\ge 2$.
  By counting the number of $1$'s in $L$, we have
  \begin{align*}
    l(r+1) & \ge \text{ the number of $1$'s in $L$} \\
    & \ge N + 2(n-N).
  \end{align*}
  Thus $2n - l(r+1) \le N \le n$.
  Lastly, denote $\vec{h}'_1, \dots,\vec{h}'_l$ to be the rows of $L'$, then clearly $\{\vec{h}'_1, \dots,\vec{h}'_l\}$ is a set of parity checks of $\mathcal{C}'$ such that $1 \le \wt{(\vec{h}'_i)} \le r+1$.
  Note that each column of $L'$ has exactly one $1$, therefore $\cup_{i=1}^l \supp{(\vec{h}'_i)} = [N]$ and $\supp(\vec{h}'_i)\cap \supp(\vec{h}'_j) = \emptyset$ for all $1 \le i \neq j \le n$, which completes the proof.
\end{proof}

Let $\mathcal{V}' = \spn_{2}(H')^\bot$ be an $\mathcal{L}$-space of $\mathcal{C}'$, and denote $\wt(\vec{h}_{i_j}' ) = r_j+1$ for $j \in [l]$.
Then it follows $\dim(\mathcal{V}')= N-l$, and by a deduction similar to that in Corollary \ref{corDisjoint} it has
\begin{equation*}
  B_{\mathcal{V}'}( \lfloor\frac{D-1}{2} \rfloor) = \sum_{0 \le i_1+\dots+i_l \le \lfloor\frac{D-1}{4} \rfloor} \prod_{j=1}^l \binom{r_j+1}{2 i_j}\;.
\end{equation*}
Applying Proposition \ref{propLSB} to the shortened LRC $\mathcal{C}'$, we get
\begin{equation}\label{eqPreBnd}
  K \le (N-l) - \log_2 \bigl( \sum_{0 \le i_1+\dots+i_l \le \lfloor\frac{D-1}{4} \rfloor} \prod_{j=1}^l \binom{r_j+1}{2 i_j} \bigr).
\end{equation}
Then combining with Lemma \ref{lemShorten}, we get the following theorem.

\begin{theorem}\label{thmSPB}
    For any $[n,k,d]$ binary linear LRC with locality $r$, it holds
  \begin{equation}\label{eqSPB}
    k \le n - \Min_{l,r_1,\dots,r_l} \Big[ l+\log_2 \left( \Phi_l(r_1,\dots,r_l) \right) \Big],
  \end{equation}
  where
  \begin{equation*}
  \Phi_l(r_1,\dots,r_l) \!=\! \sum\limits_{0 \le i_1+\dots+i_l \le \lfloor\frac{d-1}{4} \rfloor} \prod_{j=1}^l \binom{r_j+1}{2 i_j}
  \end{equation*}
  and the `Min' is taken over all integers $l,r_1,\dots,r_l$ such that
  \begin{equation}\label{eqConditions}
    \begin{cases}
      \frac{n}{r+1} \le l \le \frac{2n}{r+2}; \\
       0 \le r_1, \dots, r_l \le r ;\\
        r_1+\dots+r_l = 2n - l(r+2).
    \end{cases}
  \end{equation}
\end{theorem}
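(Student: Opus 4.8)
The plan is to combine the sphere-packing inequality \eqref{eqPreBnd} for the shortened code $\mathcal{C}'$ with the structural bounds in Lemma \ref{lemShorten}, and then argue that all the relevant quantities can be replaced by the extremal values appearing in \eqref{eqConditions}. First I would start from \eqref{eqPreBnd}, which reads $K \le (N-l) - \log_2\Phi_l(r_1,\dots,r_l)$ with $r_j+1 = \wt(\vec{h}'_{i_j})$ and $D\ge d$. Since $\Phi_l$ is defined with the floor $\lfloor\frac{d-1}{4}\rfloor$ rather than $\lfloor\frac{D-1}{4}\rfloor$, I need to note that enlarging the minimum distance from $d$ to $D$ only shrinks the summation range, so $B_{\mathcal{V}'}(\lfloor\frac{D-1}{2}\rfloor) \le \Phi_l(r_1,\dots,r_l)$ and the inequality \eqref{eqPreBnd} remains valid after this replacement. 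Then I would feed in the bounds from Lemma \ref{lemShorten}(i): $K \ge N-(n-k)$ gives $N-(n-k) \le K \le (N-l)-\log_2\Phi_l$, and the $N$ cancels, yielding $k \le n - l - \log_2\Phi_l(r_1,\dots,r_l)$.

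Next I would address the constraints \eqref{eqConditions}. The parameters $l$ and $r_1,\dots,r_l$ are not free: $l$ is the number of local parity checks in the $\mathcal{L}$-cover $H$, and $r_j+1 = \wt(\vec{h}'_{i_j})$ is the weight of the $j$-th row of $L'$. Immediately $0\le r_j \le r$ since $1\le \wt(\vec{h}'_{i_j})\le r+1$. For the count, each column of $L'$ contributes exactly one $1$, so the total number of $1$'s in $L'$ is $N$, i.e. $\sum_{j=1}^l (r_j+1) = N$, hence $\sum r_j = N - l$. Combining with the double-counting inequality $l(r+1)\ge N + 2(n-N) = 2n-N$ from the proof of Lemma \ref{lemShorten}, which gives $N \ge 2n-l(r+1)$, and also $N\le n$: here I would want the \emph{equality} version. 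The subtlety is that \eqref{eqConditions} demands $\sum r_j = 2n-l(r+2)$ exactly, not merely an inequality. This should follow by observing that the count of $1$'s in the \emph{whole} matrix $L$ is exactly $l(r+1)$ if every $\vec{h}_{i_j}$ has full weight $r+1$; more carefully, writing $n_1=N$ columns of weight $1$ and counting, $\sum_j\wt(\vec{h}_{i_j}) = N\cdot 1 + (\text{contribution of heavy columns}) \ge N + 2(n-N)$, and since each $\wt(\vec{h}_{i_j})\le r+1$ we get $l(r+1)\ge 2n-N$, i.e. $N\ge 2n-l(r+1)$, equivalently $\sum r_j = N-l \ge 2n-l(r+1)-l = 2n-l(r+2)$. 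For the upper range of $l$: $N\le n$ gives $\sum r_j = N-l\le n-l$; I would need this together with $\sum r_j = 2n-l(r+2)$ (treated as the worst case, after relaxing) to produce $l\le \frac{2n}{r+2}$, while $\sum r_j\ge 0$ combined with $l(r+1)\ge N\ge$ (something) and $\frac{n}{r+1}\le l$ comes from $H$ covering all $n$ coordinates with checks of weight $\le r+1$, so $l(r+1)\ge n$.

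The cleanest way to package this is a relaxation argument: the true tuple $(l, r_1,\dots,r_l)$ satisfies $\sum r_j \ge 2n-l(r+2)$, $0\le r_j\le r$, $\frac{n}{r+1}\le l\le\frac{2n}{r+2}$. Since $\Phi_l(r_1,\dots,r_l)$ is monotonically nondecreasing in each $r_j$ (each $\binom{r_j+1}{2i_j}$ is nondecreasing in $r_j$ for $2i_j\le r_j+1$, and $0$ otherwise), decreasing any $r_j$ only decreases $l+\log_2\Phi_l$; so I may assume $\sum r_j = 2n-l(r+2)$ exactly (decreasing the $r_j$'s until the sum is tight is legitimate because it can only lower the bound, and the feasible region of \eqref{eqConditions} with the equality is reached). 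Thus $k \le n - (l+\log_2\Phi_l(r_1,\dots,r_l))$ for this particular feasible tuple, which is at least $n - \Min_{l,r_1,\dots,r_l}[l+\log_2\Phi_l]$ over all tuples satisfying \eqref{eqConditions}, giving \eqref{eqSPB}.

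The main obstacle I anticipate is the monotonicity/relaxation step: one must be careful that reducing an $r_j$ keeps the tuple feasible (it stays in $[0,r]$, and reducing the sum is fine because only a lower bound $\sum r_j\ge 2n-l(r+2)$ is imposed by the code structure), and that $\Phi_l$ is genuinely monotone — this requires that $\binom{r_j+1}{2i_j}$, extended by $0$ when $2i_j>r_j+1$, is nondecreasing in $r_j$, which is true term by term. A minor point is justifying the range $\frac{n}{r+1}\le l\le\frac{2n}{r+2}$: the lower bound is just $l(r+1)\ge n$ (covering), and the upper bound follows from $N\ge 2n-l(r+1)$ together with $N-l = \sum r_j \ge 0$ only after we have also used $N\le n$; in fact $l\le\frac{2n}{r+2}$ is equivalent to $2n-l(r+2)\ge 0$, i.e. to the existence of a nonnegative solution for $\sum r_j$, so it comes for free once we insist the equality $\sum r_j=2n-l(r+2)$ can be met with $r_j\ge 0$. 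I would state these as a short sequence of displayed inequalities and then invoke Proposition \ref{propLSB} and Lemma \ref{lemShorten} to close.
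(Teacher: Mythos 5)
Your overall strategy --- shorten via Lemma \ref{lemShorten}, apply \eqref{eqPreBnd}, read off the $r_j$'s from the row weights of $L'$, and then use monotonicity of $\Phi_l$ in each $r_j$ to tighten $\sum_j r_j \ge 2n-l(r+2)$ to the equality in \eqref{eqConditions} --- is the same as the paper's, and the relaxation step itself is sound (it is precisely how the paper justifies restricting to $\sum_j r_j = 2n-l(r+2)$). But there is one genuine gap: you assert that the true cover size satisfies $l \le \frac{2n}{r+2}$, and your justification for this is circular. What the code structure actually gives is $\sum_j r_j = N-l \ge \max\{0,\;2n-l(r+2)\}$; nothing forces $2n-l(r+2)\ge 0$. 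If $l > \frac{2n}{r+2}$ (which can happen, e.g.\ when the $\mathcal{L}$-cover consists of many local checks of weight well below $r+1$), then $2n-l(r+2)<0$, no reduction of the $r_j$'s can reach the equality constraint while keeping $r_j\ge 0$, and the true tuple --- indeed every tuple with that value of $l$ --- lies outside the feasible region of \eqref{eqConditions}, so your relaxation argument produces nothing. The paper devotes a separate Case 1 to exactly this situation: there one uses the crude consequence $k \le n-l < n-\frac{2n}{r+2}$ together with the observation that the minimum in \eqref{eqSPB} is at most $\frac{2n}{r+2}$ (evaluate at $l=\frac{2n}{r+2}$ and $r_1=\dots=r_l=0$, where $\Phi_l=1$), so \eqref{eqSPB} still holds. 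You need to add this case split to close the proof.

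A smaller point: your inequality $B_{\mathcal{V}'}(\lfloor\frac{D-1}{2}\rfloor)\le\Phi_l(r_1,\dots,r_l)$ is reversed. Since $D\ge d$, the range $0\le i_1+\dots+i_l\le\lfloor\frac{D-1}{4}\rfloor$ is \emph{larger} than the range defining $\Phi_l$, so in fact $B_{\mathcal{V}'}(\lfloor\frac{D-1}{2}\rfloor)\ge\Phi_l(r_1,\dots,r_l)$, and it is this direction that lets you pass from \eqref{eqPreBnd} to $K\le(N-l)-\log_2\Phi_l(r_1,\dots,r_l)$: a larger quantity inside the logarithm only strengthens the upper bound on $K$. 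The conclusion you want is correct, but the stated direction would not justify the replacement, so this step should be flipped.
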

\begin{proof}
    From Lemma \ref{lemShorten} it has $K \ge N - (n-k), D \ge d$.
  Then
  \begin{align*}
    k & \le K-N +n \\
    &  \substack{(a) \\ \le}\,\, n-l - \log_2 \bigl( \sum_{0 \le i_1+\dots+i_l \le \lfloor\frac{D-1}{4} \rfloor} \prod_{j=1}^l \binom{r_j+1}{2 i_j} \bigr) \\
    & \substack{(b) \\ \le}\,\, n-l - \log_2 \left( \Phi_l(r_1,\dots,r_l) \right),
  \end{align*}
  where (a) is from \eqref{eqPreBnd} and (b) holds because $D \ge d$.
  Note that the integers $l,r_1,\dots,r_l$ satisfies
  \begin{equation*}
    \begin{cases}
      \frac{n}{r+1} \le l; \\
      0 \le r_1,\dots,r_l \le r; \\
      \sum_{j=1}^l r_j \ge 2n - l (r+2).
    \end{cases}
  \end{equation*}
  There are two cases.

  {\it Case 1: $l > \frac{2n}{r+2}$.} On the one hand, we have $k\le n - l < n - \frac{2n}{r+2}$.
  On the other hand, with the restriction \eqref{eqConditions}, it has
  \begin{align*}
    &\phantom{{}={}} \Min_{l,r_1,\dots,r_l} \Big[ l+\log_2 \left( \Phi_l(r_1,\dots,r_l) \right)\Big] \\
  &\le\frac{2n}{r+2}+\log_2 \left( \Phi_{\frac{2n}{r+2}}(0,\dots,0) \right) \\
  &= \frac{2n}{r+2}.
  \end{align*}
  So inequality \eqref{eqSPB} holds.

  {\it Case 2: $l \le \frac{2n}{r+2}$.} In this case it has
  \begin{equation}\label{eqConditionsPre}
    \begin{cases}
      \frac{n}{r+1} \le l \le\frac{2n}{r+2} ; \\
      0 \le r_1,\dots,r_l \le r; \\
      \sum_{j=1}^l r_j \ge 2n - l (r+2).
    \end{cases}.
  \end{equation}
  So we have
  \begin{equation*}
    k \le n - \Min_{l,r_1,\dots,r_l} \Big[ l+\log_2 \left( \Phi_l(r_1,\dots,r_l) \right) \Big],
  \end{equation*}
  where the `Min' is taken over \eqref{eqConditionsPre}.
  Note that $2n-l(r+2) \ge 0$, so a necessary condition for optimizing \eqref{eqSPB} is that $\sum_{j=1}^l r_j = 2n - l (r+2)$.
  Then the optimization can be restricted to the condition \eqref{eqConditions}, and thus the theorem holds.
\end{proof}

For any given $n,d,r$, Theorem \ref{thmSPB} gives an upper bound on the dimension $k$ based on solving an optimization problem.
However, solving the optimization problem is very difficult in general since the objective function in \eqref{eqSPB} is nonlinear.
Nevertheless, it is still possible to simplify the bound \eqref{eqSPB} in some special cases. Next, we will derive an explicit upper bound  from Theorem \ref{thmSPB} for $d \ge 5$ .

\subsection{Explicit Bound for $d\ge 5$}
\begin{theorem}\label{thm5}
  For any $[n,k,d]$ binary linear LRC with locality $r$ such that $d \ge 5$ and $2 \le r \le \frac{n}{2}-2$, it holds
  \begin{equation}\label{eqBnd5}
    k \le \frac{rn}{r+1} - \min\{ \log_2(1+\frac{rn}{2}) ,\frac{rn}{(r+1)(r+2)} \}.
  \end{equation}
\end{theorem}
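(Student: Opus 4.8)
The plan is to start from the optimization bound in Theorem~\ref{thmSPB} and lower-bound the quantity $\Min_{l,r_1,\dots,r_l}\bigl[l+\log_2\Phi_l(r_1,\dots,r_l)\bigr]$ over the feasible set \eqref{eqConditions}, for the special case $d\ge 5$. Since $d\ge 5$ gives $\lfloor (d-1)/4\rfloor \ge 1$, the sum defining $\Phi_l$ contains at least the terms with $i_1+\dots+i_l\le 1$, so
\[
  \Phi_l(r_1,\dots,r_l)\;\ge\; 1+\sum_{j=1}^l\binom{r_j+1}{2}\;=\;1+\frac12\sum_{j=1}^l r_j(r_j+1).
\]
Using the equality constraint $\sum_j r_j = 2n-l(r+2)$ from \eqref{eqConditions}, this lower bound is a function of $l$ and of the way the fixed sum $2n-l(r+2)$ is distributed among $r_1,\dots,r_l$ subject to $0\le r_j\le r$. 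So the first step is: for fixed $l$, find how the distribution of the $r_j$ affects $1+\frac12\sum r_j(r_j+1)$, hence affects $\log_2\Phi_l$.

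Next I would split into the two regimes the statement clearly anticipates (the two terms inside the outer $\min$). First consider $l=\frac{n}{r+1}$, its smallest feasible value; then $\sum_j r_j = 2n-\frac{n}{r+1}(r+2) = \frac{n r}{r+1}$, and the constraint $r_j\le r$ together with there being $l=\frac{n}{r+1}$ slots forces each $r_j$ to be on average $\frac{nr/(r+1)}{n/(r+1)}=r$, i.e. essentially $r_j=r$ for all $j$. This recovers the disjoint-repair-group configuration, and here $\Phi_l\ge 1+\frac{n}{r+1}\binom{r+1}{2} = 1+\frac{rn}{2}$, giving $l+\log_2\Phi_l\ge \frac{n}{r+1}+\log_2(1+\frac{rn}{2})$, hence the first term $\log_2(1+\frac{rn}{2})$ in \eqref{eqBnd5}. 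For the other regime I would try $l$ near its largest feasible value $\frac{2n}{r+2}$; there $\sum_j r_j\to 0$, so all $r_j=0$ and $\Phi_l=1$, giving $l+\log_2\Phi_l\ge \frac{2n}{r+2}$, and $n-\frac{2n}{r+2}=\frac{rn}{r+1}-\bigl(\frac{2n}{r+2}-\frac{n}{r+1}\bigr)=\frac{rn}{r+1}-\frac{rn}{(r+1)(r+2)}$, which is the second term. The condition $2\le r\le \frac n2-2$ is exactly what makes $\frac{n}{r+1}\le\frac{2n}{r+2}$ so both endpoints are feasible.

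The real work — and the main obstacle — is handling \emph{intermediate} values of $l$, where neither endpoint configuration applies, and showing that for every such $l$ the quantity $l+\log_2\Phi_l(r_1,\dots,r_l)$ is at least $\min\{\frac{n}{r+1}+\log_2(1+\frac{rn}{2}),\ \frac{2n}{r+2}\}$. The approach would be: write $l = \frac{n}{r+1}+t$ for $0\le t\le \frac{2n}{r+2}-\frac{n}{r+1}=\frac{rn}{(r+1)(r+2)}$; then $\sum_j r_j = \frac{nr}{r+1}-(r+2)t$, and among $l$ values in $[0,r]$ this sum is extremized (for the convex function $\sum r_j(r_j+1)$) by making the $r_j$ as unequal as possible, i.e. as many as possible equal to $r$ and the rest $0$ — say $m$ of them equal $r$, with $m r \approx \frac{nr}{r+1}-(r+2)t$, so $m\approx \frac{n}{r+1}-\frac{(r+2)t}{r}$. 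Then $\Phi_l\gtrsim 1+m\binom{r+1}{2}\approx 1+\frac{r}{2}\bigl(\frac{n}{r+1}-\frac{(r+2)t}{r}\bigr)(r+1)$, and one checks that the decrease of $\log_2\Phi_l$ as $t$ grows is dominated by the increase of $l$ by $t$ (a $+t$ linear gain versus a logarithmic loss), so the minimum over $t\in[0,\,\frac{rn}{(r+1)(r+2)}]$ is attained at one of the two endpoints $t=0$ or $t=\frac{rn}{(r+1)(r+2)}$, yielding exactly the outer $\min$ in \eqref{eqBnd5}. Making this monotonicity/convexity argument rigorous — in particular justifying the "as unequal as possible" claim for the constrained integer optimum and bounding the logarithmic term uniformly in $t$ — is the technical heart of the proof; the endpoint computations themselves are routine once that reduction is in place.
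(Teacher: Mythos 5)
Your overall strategy (start from Theorem~\ref{thmSPB}, truncate $\Phi_l$ to the terms with $i_1+\dots+i_l\le 1$, reduce to a one-variable problem in $l$, and show the minimum sits at one of the two endpoints $l=\frac{n}{r+1}$ and $l=\frac{2n}{r+2}$) is exactly the paper's, and your two endpoint computations are correct. But there is a genuine error in the step that handles intermediate $l$: you claim that, for fixed $l$ and fixed $\sum_j r_j$, the relevant extremal configuration is the one with the $r_j$ ``as unequal as possible'' (as many as possible equal to $r$, the rest $0$), and you then use $1+m\binom{r+1}{2}$ as a \emph{lower} bound on $\Phi_l$. The convexity goes the other way: since $\binom{x+1}{2}$ is convex, a sum $\sum_j\binom{r_j+1}{2}$ with fixed total $\sum_j r_j=S$ is \emph{maximized} at the most unequal configuration and \emph{minimized} at the balanced one, where Jensen gives $\sum_j\binom{r_j+1}{2}\ge l\binom{S/l+1}{2}$. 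Your quantity $1+\frac{S(r+1)}{2}$ exceeds the balanced value $1+\frac{S(S+l)}{2l}$ whenever $S<lr$, and for $5\le d\le 8$ one has $\lfloor\frac{d-1}{4}\rfloor=1$ so $\Phi_l$ \emph{equals} $1+\sum_j\binom{r_j+1}{2}$; hence there are feasible configurations on which your claimed lower bound is simply false, and the chain of inequalities breaks.

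The paper's proof applies Jensen in the correct direction to get $\Phi_l\ge 1+\frac{1}{2l}\bigl(2n-l(r+1)\bigr)\bigl(2n-l(r+2)\bigr)$, sets $f(l)=l+\log_2\bigl(1+\frac{1}{2l}(2n-l(r+1))(2n-l(r+2))\bigr)$, and then proves $f''(l)\le 0$ on $[\frac{n}{r+1},\frac{2n}{r+2}]$ by an explicit computation that uses the hypothesis $2\le r\le\frac{n}{2}-2$ in an essential way; concavity then forces the minimum to an endpoint, which is precisely the outer $\min$ in \eqref{eqBnd5}. So beyond fixing the direction of the convexity reduction, the ``linear gain beats logarithmic loss'' heuristic you defer to is not enough on its own --- the endpoint reduction requires the concavity of $f$, which is where the constraint on $r$ actually enters.
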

\begin{proof}
  When $d \ge 5$, it has $\bigl \lfloor \frac{d-1}{4} \bigr \rfloor \ge 1$ and therefore
  \begin{align*}
    \Phi_l(r_1,\dots,r_l)   & \ge  \sum_{0 \le i_1+\dots+i_l \le 1} \prod_{j=1}^l \binom{r_j+1}{2 i_j} \\
    & =     1+ \binom{r_1+1}{2} + \dots + \binom{r_l+1}{2}.
  \end{align*}
Note that $\binom{x+1}{2} = \frac{1}{2} x(x+1)$ is a convex real-valued function and it is required in \eqref{eqConditions} that $r_1+\dots+r_l = 2n-l(r+2)$, so
  \begin{align*}
    \Phi_l(r_1,\dots,r_l) & \ge 1+ l\binom{\frac{1}{l}\sum_{j=1}^l(r_j+1)}{2} \\
    & = 1 + \frac{1}{2l} \left(2n-l(r+1) \right) \left( 2n-l(r+2) \right)).
  \end{align*}
  It follows from Theorem \ref{thmSPB} that
  \begin{align*}
    k & \le n - \Min_{l,r_1,\dots,r_l} \Big[ l+\log_2 ( \Phi_l(r_1,\dots,r_l) ) \Big] \\
    & \le n - \Min_{l} \Big[ l \! + \! \log_2 \bigl( 1 \! + \!  \frac{(2n-l(r+1) ) ( 2n-l(r+2) )}{2l} \bigr)  \Big],
  \end{align*}
  where the integer $l$ satisfies $\frac{n}{r+1} \le l \le \frac{2n}{r+2}$ according to \eqref{eqConditions}.
  Let
  $$f(l) = l+\log_2 ( 1+ \frac{1}{2l} (2n-l(r+1) ) ( 2n-l(r+2) ))$$
   be a function defined for integers $l \in [\frac{n}{r+1},\frac{2n}{r+2}]$.
   We claim
  \begin{equation*}
    {f(l)} \ge \frac{n}{r+1}+\min \{\log_2(1+\frac{rn}{2}),\frac{rn}{(r+1)(r+2)}\}
  \end{equation*}
  for $2 \le r \le \frac{n}{2} - 2$, then the theorem follows directly.

Firstly, we show that $f''(l) \le 0$ for $2 \le r \le \frac{n}{2} - 2$.
Note that
\begin{equation*}
  f''(l) \!=\! \frac{80n^4 \!-\! (l^2(r^2 \!+\! 3r \!+\! 2) \!-\! 8n^2)^2 \!-\! 16 l (n^3(3 \!+\! 2r) \!-\! n^2)}{l^2 (4 n^2 \!+\! l^2(2 + 3r + r^2) + l(2 - 2n(3 + 2r)))^2 \ln2},
\end{equation*}
then it suffices to prove $g(l) \le 0$ for $2 \le r \le \frac{n}{2} - 2$, where
$g(l) = 80n^4 \!-\! (l^2(r^2 \!+\! 3r \!+\! 2) \!-\! 8n^2)^2 \!-\! 16 l (n^3(3 \!+\! 2r) \!-\! n^2)$.
Since $g'''(l) = -24l(r^2+3r+2)^2 <0$, it has $g'(l)$ is a concave function with $g'(\frac{n}{r+1}) = \frac{4n^2 (4r+4-nr^2)}{r+1} <0$ and $g'(\frac{2n}{r+2}) = \frac{16n^2(2+r+nr)}{r+2} >0$.
It follows that
\begin{align*}
  g(l) & \le \max\{g(\frac{n}{r+1}),g(\frac{2n}{r+2})\} \\
  & =\max\{\frac{n^3(16(r+1) \!-\! n(r+2)^2)}{(r+1)^2},\frac{16n^3(2(r+2) \!-\! n)}{(r+2)^2}\} \\
  & \le 0
\end{align*}
for all $2 \le r \le \frac{n}{2} - 2$.

According to $f''(l) \le 0$, $f(l)$ is a concave function, then we have
\begin{align*}
  {f(l)} & \ge \min \{f(\frac{n}{r+1}),f(\frac{2n}{r+2}) \} \\
  & = \min \{\frac{n}{r+1}+\log_2(1+\frac{rn}{2}),\frac{2n}{r+2}\} \\
  & = \frac{n}{r+1}+\min \{\log_2(1+\frac{rn}{2}),\frac{rn}{(r+1)(r+2)}\},
\end{align*}
and therefore the theorem follows.
\end{proof}

Since we focus on linear codes, then $k$ is actually upper bounded by the largest integer no more than the right hand side of the inequality.
For sufficiently large $n$, it always holds
$\log_2(1+\frac{rn}{2}) < \frac{rn}{(r+1)(r+2)} $. More specifically,
this inequality holds whenever $n \ge 5 (r+1)(r+2)$. Therefore, the bound \eqref{eqBnd5} can be further simplified as $k \le \frac{rn}{r+1} - \bigl\lceil \log_2(1+\frac{rn}{2}) \bigr\rceil.$

Next, we give a comparison between the bound \eqref{eqBnd5} and the C-M bound \eqref{eqCM}, where the upper bound on $k^{(2)}_{opt}(n,d)$ is computed by using SageMath \cite{sage} and the web database \cite{grassl2007codetables}.

For $5 \le d \le 8$, according to our computation, the bound \eqref{eqBnd5} can always outperform the C-M bound for large values of $n$. Specifically,
Fig.~\ref{FigD5} and Fig.~\ref{FigD8} display comparisons of the two bounds for $r=3,d= 5,10 \le n \le 60$ and $r=2,d=8, 60 \le n \le 110$ respectively.
Moreover in Table \ref{Table1}, based on a detailed calculation of the two bounds for  $2 \le r \le 5$ and $2 \le n \le 250$, we list the tipping points of $n$'s that the bound \eqref{eqBnd5} is tighter than  the C-M bound thereafter.

\renewcommand{\arraystretch}{1.2}
\begin{table}[!htb]
\centering
\begin{tabular}[b]{|c|c|c|c|c|c|}
\hline
  & $r=2$ & $r=3$ & $r=4$ & $r=5$  \\ \hline
$d=5$ & $n \ge 19$ & $n \ge 26$ & $n \ge 34$ & $n \ge 43$  \\ \hline
$d=6$ & $n \ge 23$ & $n \ge 31$ & $n \ge 40$ & $n \ge 50$ \\ \hline
$d=7$ & $n \ge 31$ & $n \ge 41$ & $n \ge 84$ & $n \ge 125$  \\ \hline
$d=8$ & $n \ge 50$ & $n \ge 70$ & $n \ge 145$ & $n \ge 228$  \\
\hline
\end{tabular}
\caption{}
\label{Table1}
\end{table}
\renewcommand{\arraystretch}{1}

For $d \ge 9$, it can be checked that the bound \eqref{eqBnd5} is inferior to the C-M bound.
A reason causing this disadvantage is that in this case it has $\bigl \lfloor \frac{d-1}{4} \bigr \rfloor \ge2$ while the bound \eqref{eqBnd5} is derived by lower-bounding $\Phi_l$ by its value in the case $\bigl \lfloor \frac{d-1}{4} \bigr \rfloor =1$.
If we use a better lower bound of $\Phi_l(r_1,\dots,r_l)$ instead of that used in the proof of Theorem \ref{thm5}, an upper bound tighter than \eqref{eqBnd5} could be expected.
However, we can not get an explicit bound in this case  yet since the corresponding optimization problem is still very complicated.

\section{Construction Attaining The Upper Bound}
In this section, we give a new construction of binary linear LRCs.
The code has minimum distance $d\ge6$, and attains the upper bound \eqref{eqBnd5} in Theorem \ref{thm5}.

The construction relies on two matrices $A$ and $B$ defined as follows.
Suppose $s$ and $t$ are two positive integers such that $2t \mid s$ and $\frac{s}{2t} \ge 2$. Let $A$ be a binary matrix of size $2t \times 2^t$ such that any  $4$ columns of $A$ are linearly independent.
For $t \le 2$, $A$ can be  chosen as the identity matrix.
For $t \ge 3$, $A$ is a parity check matrix of a $[2^t,2^t-2t,5]$ binary code which can be constructed from nonprimitive cyclic codes of length $2^t+1$ (see e.g., \cite{chen1991construction}). We give a detailed construction of $A$ in Appendix \ref{appA}. Define $B$ to be a matrix whose columns are all nonzero $\frac{s}{2t}$-tuples from $\mathbb{F}_{2^{2t}}$ with first nonzero entry equal to $1$.
Then $B$ is actually a parity check matrix of a $2^{2t}$-ary Hamming code, and the size of $B$ is $\frac{s}{2t} \times \frac{2^s-1}{2^{2t}-1}$.

By fixing a basis of $\mathbb{F}_{2^{2t}}$ over $\mathbb{F}_2$, each vector in $\mathbb{F}_2^{2t}$ can be written as an element in $\mathbb{F}_{2^{2t}}$ and vice versa.
We denote by $a_1, \dots, a_{2^t} \in \mathbb{F}_{2^{2t}}$ the $2^t$ elements corresponding to the columns of $A$, and denote by a vector $\vec{\beta}_i\in\mathbb{F}_{2^{2t}}^{\,\frac{s}{2t}}$  the $i$th column of $B$ for $1 \le i \le \frac{2^s-1}{2^{2t}-1}$.
Then the binary linear LRC is constructed below.

\begin{construction}\label{cons1}
Define $\mathcal{C}$ to be a binary linear code with the parity check matrix
\begin{equation*}
  H = \begin{pmatrix}
    L_1 & L_2 & \dots & L_{l} \\
    H_1 & H_2 & \dots & H_{l}
  \end{pmatrix},
\end{equation*}
where $l = \frac{2^s-1}{2^{2t}-1}$, and
for $1 \le i \le l$, $L_i$ is an $l \times (2^t+1)$ matrix whose $i$-th row is the all-one vector and the other rows are all-zero vectors, $H_i$ is an $s \times (2^t+1)$ matrix over $\mathbb{F}_2$ whose columns are binary expansions of the vectors $\{\vec{0},a_1\vec{\beta}_i, a_2\vec{\beta}_i,\dots,a_{2^t}\vec{\beta}_i\}$.
\end{construction}

\begin{figure}[t]
\centering
\includegraphics[width=0.48\textwidth]{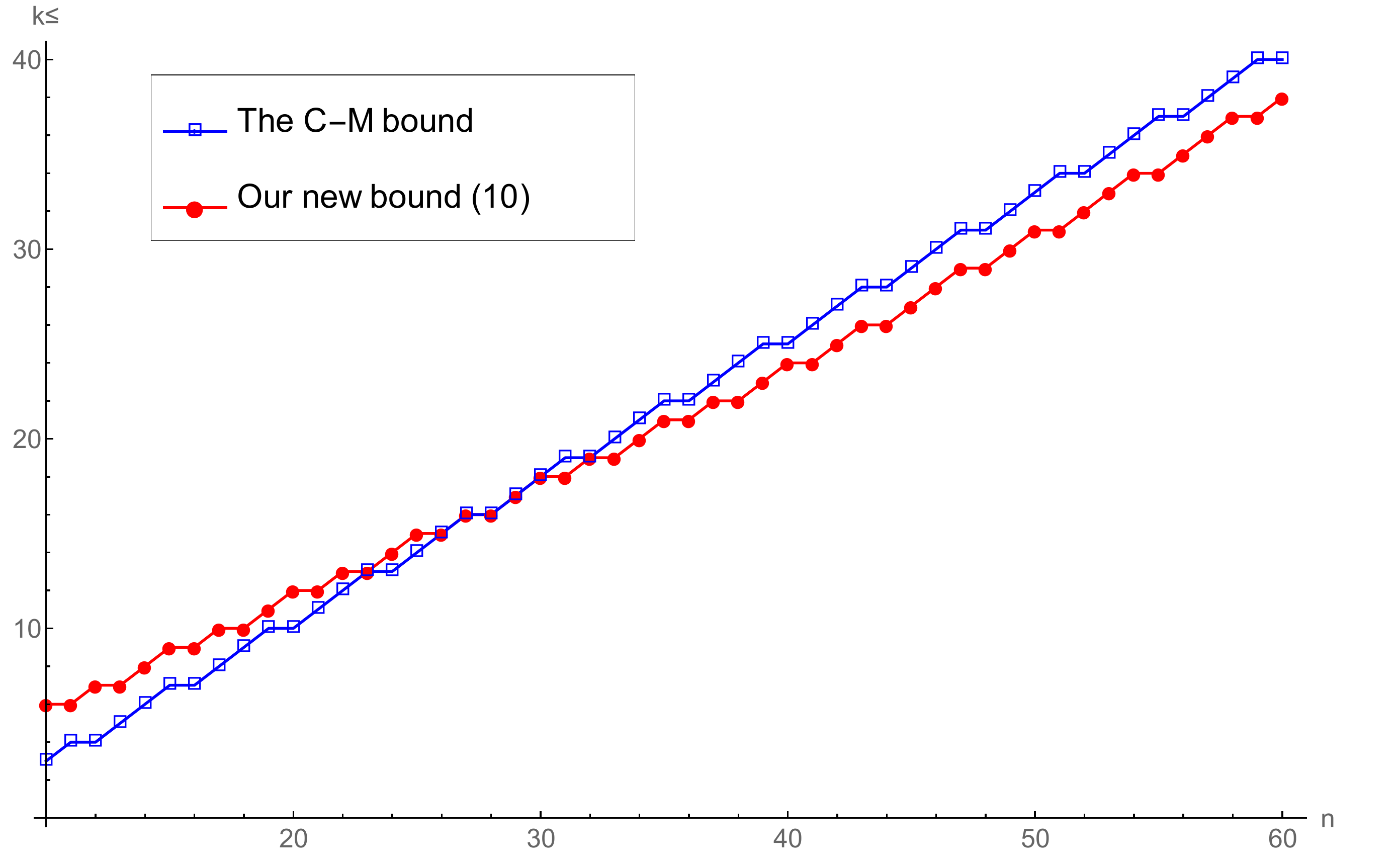}
\caption{A comparison of the C-M bound and the bound in Theorem \ref{thm5} for $r=3$, $d=5, 10 \le n \le 60$. }
\label{FigD5}
\end{figure}

\begin{figure}[t]
\centering
\includegraphics[width=0.48\textwidth]{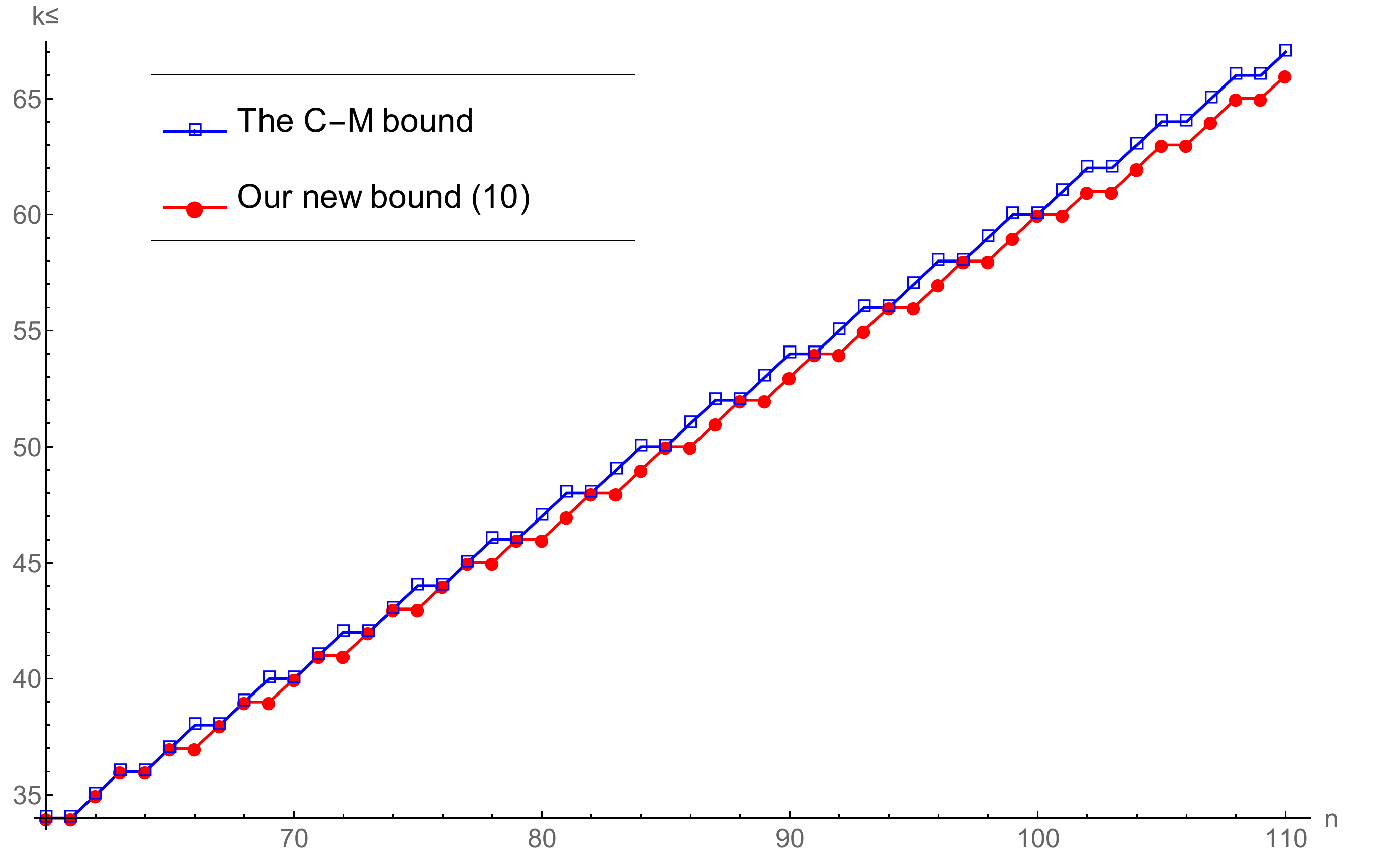}
\caption{A comparison of the C-M bound and the bound in Theorem \ref{thm5} for $r=2$, $d=8, 60 \le n \le 110$.}
\label{FigD8}
\end{figure}

\begin{example}
  Suppose $s=4$ and $t=1$.
  Then we can choose
  \begin{equation*}
    A = \begin{pmatrix}
      1 & 0 \\ 0 & 1
    \end{pmatrix} \in \mathbb{F}_2^{2 \times 2},
    \text{\hspace*{4pt}}
  B = \begin{pmatrix}
    1 & 1 & 1 & 1 & 0 \\
    \omega^2 & \omega & 1 & 0 & 1
  \end{pmatrix} \in \mathbb{F}_4^{2 \times 5},
  \end{equation*}
  where $\omega$ is a primitive element in $\mathbb{F}_4$ such that $\omega^2+\omega+1=0$.
  Fixing a basis $\{1, \omega\}$, the two columns of $A$ can be written as two elements in $\mathbb{F}_4$, i.e., $a_1 = (1, \omega) \cdot \bigl( \begin{smallmatrix} 1 \\ 0 \end{smallmatrix} \bigr)=1, a_2 = (1, \omega) \cdot \bigl( \begin{smallmatrix} 0 \\ 1 \end{smallmatrix} \bigr) =\omega$.
  Note that $\vec{\beta}_1 = (\begin{smallmatrix} 1 \\ \omega^2  \end{smallmatrix})$, then
  \begin{equation*}
    \alpha_1 \vec{\beta}_1 = \begin{pmatrix}       1  \\       \omega^2      \end{pmatrix},
    \text{\hspace*{4pt}}
    \alpha_2 \vec{\beta}_1 = \begin{pmatrix}       \omega \\      1     \end{pmatrix}.
  \end{equation*}
  By expanding $\{\vec{0}, \alpha_1 \vec{\beta}_1, \alpha_2 \vec{\beta}_1\} \subseteq \mathbb{F}_4^2$ into binary vectors with respect to the basis $\{1 ,\omega\}$, we get
  \begin{equation*}
    H_1 = \begin{pmatrix} 0 & 1 & 0 \\ 0 & 0 & 1 \\ 0 & 1 & 1 \\ 0 & 1 & 0 \end{pmatrix}.
  \end{equation*}
The other $H_i$'s can be computed similarly, so we have
  \begin{equation*}
    H = \begin{pmatrix}
      111 &        &       &       &       \\
            & 111 &       &       &       \\
            &       & 111 &       &       \\
            &       &       & 111 &       \\
            &       &       &       & 111 \\
      010 & 010 & 010 & 010 & 000 \\
      001 & 001 & 001 & 001 & 000  \\
      011 & 001 & 010 & 000 & 010  \\
      010 & 011 & 001 & 000 & 001  \\
    \end{pmatrix}.
  \end{equation*}
It can be verified that any  $5$ columns of $H$ are linearly independent.
So $H$ defines an $[n=15,k=6,d\ge6]$ binary LRC with locality $r=2$.
Substituting $n=15,d=6,r=2$ into the C-M bound~\eqref{eqCM} yields $k \le 6$, so this binary linear LRC is optimal with respect to the C-M bound.
\end{example}

\begin{theorem}
  The code $\mathcal{C}$ obtained from Construction \ref{cons1} is an binary linear LRC with $n = \frac{2^s-1}{2^t-1}, k \ge \frac{r n}{r+1} -s, d \ge 6$ and $r = 2^t$.
  Moreover, $\mathcal{C}$ attains the upper bound \eqref{eqBnd5} for all positive integers $s,t$ satisfying $2t \mid s$ and $\frac{s}{2t} \ge 2$ except the case $s=4, t=1$.
\end{theorem}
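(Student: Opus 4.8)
\emph{Proof proposal.} I would prove the three assertions in order — the locality and dimension, the bound $d\ge 6$, and attainment of \eqref{eqBnd5} — with the real work concentrated in the distance computation.

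First, the easy parameters. Counting the columns of $H$ gives $n=l(2^t+1)=\frac{(2^s-1)(2^t+1)}{2^{2t}-1}=\frac{2^s-1}{2^t-1}$, and since $2t\mid s$ the integer $l=\frac{2^s-1}{2^{2t}-1}$ equals $\frac{n}{r+1}$ with $r=2^t$. The first $l$ rows of $H$ (those coming from $L_1,\dots,L_l$) are parity checks $\vec h_1,\dots,\vec h_l\in\mathcal C^\bot$ whose supports are the $l$ blocks of $2^t+1$ consecutive coordinates; these are pairwise disjoint and cover $[n]$, so $\mathcal C$ is an LRC with locality $r=2^t$ (indeed one with disjoint repair groups). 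Since $H$ has only $l+s$ rows, $k=n-\mathrm{rank}(H)\ge n-l-s=\frac{rn}{r+1}-s$.

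Next, $d\ge 6$: I would show that no nonempty set $S$ of at most $5$ columns of $H$ sums (over $\mathbb F_2$) to $\vec 0$. All $2^t+1$ columns of a given block share the same top part $\vec e_i\in\mathbb F_2^l$, so the top part of $\sum_{\vec c\in S}\vec c$ vanishes iff each block contributes an even number of columns to $S$; with $|S|\le 5$ the only options are two columns of one block, four columns of one block, or two columns of each of two distinct blocks. In each case the bottom part of $\sum_{\vec c\in S}\vec c$ is the binary expansion of $\gamma_i\vec\beta_i$ (one block) or $\gamma_i\vec\beta_i+\gamma_j\vec\beta_j$ (two blocks), where every coefficient $\gamma_{\bullet}\in\mathbb F_{2^{2t}}$ is a sum of at most four distinct elements of $\{0,a_1,\dots,a_{2^t}\}$; because any $\le 4$ columns of $A$ are linearly independent, each such coefficient is nonzero. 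Hence $\gamma_i\vec\beta_i\ne\vec 0$ in the one-block cases, while $\gamma_i\vec\beta_i=\gamma_j\vec\beta_j$ with $i\ne j$ would force $\vec\beta_i$ and $\vec\beta_j$ to be proportional over $\mathbb F_{2^{2t}}$, contradicting that the columns of $B$ — being the columns of a $2^{2t}$-ary Hamming parity-check matrix — are pairwise non-proportional. Thus $\sum_{\vec c\in S}\vec c\ne\vec 0$, so $d\ge 6$.

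Finally, attainment of \eqref{eqBnd5}. The hypotheses of Theorem \ref{thm5} hold: $d\ge 6\ge 5$, $r=2^t\ge 2$, and $r\le\frac n2-2$ (the last from $s\ge 4t$, which gives $n\ge 1+2^t+2^{2t}+2^{3t}$), so $k\le\frac{rn}{r+1}-\bigl\lceil\min\{\log_2(1+\frac{rn}{2}),\,\frac{rn}{(r+1)(r+2)}\}\bigr\rceil$. Using $rn=2^tn$ and $n(2^t-1)=2^s-1$, a short calculation shows $1+\frac{rn}{2}=2^s$ when $t=1$ and $2^{s-1}<1+\frac{rn}{2}<2^s$ when $t\ge 2$, so $\lceil\log_2(1+\frac{rn}{2})\rceil=s$ in all cases; and since $\log_2(1+\frac{rn}{2})\le s$, it remains to check $s\le\frac{rn}{(r+1)(r+2)}$, i.e. $s(2^{2t}-1)(2^t+2)\le 2^t(2^s-1)$, to conclude the minimum equals $\log_2(1+\frac{rn}{2})$ and hence $\lceil\min\rceil=s$. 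For $t=1$ this inequality is $6s+1\le 2^s$, which holds for all valid $s\ge 6$ but fails at $s=4$ — the stated exception; for $t\ge 2$ it follows from $s\ge 4t$ by checking the base case $s=4t$ and monotonicity in $s$. Then $k\le\frac{rn}{r+1}-s$, which together with the lower bound above gives $k=\frac{rn}{r+1}-s$, so $\mathcal C$ meets \eqref{eqBnd5}. The main obstacle is the $d\ge 6$ step — one must argue that a dependency among $\le 5$ columns is confined to one or two blocks and then kill the induced identities in $\mathbb F_{2^{2t}}$ via the two design properties ($4$-wise independence of the columns of $A$, non-proportionality of the columns of $B$); the numerical inequality $\log_2(1+\frac{rn}{2})\le\frac{rn}{(r+1)(r+2)}$ is elementary but must be verified across the whole valid range, and it is exactly this check that isolates $(s,t)=(4,1)$.
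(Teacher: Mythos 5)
Your proposal is correct and follows essentially the same route as the paper: the same parity observation on the blocks (each block contributes an even number of columns to any dependency) reduces the distance check to one or two blocks, and the same two design properties of $A$ and $B$ kill the resulting identities in $\mathbb{F}_{2^{2t}}$, while the optimality part rests on the same two numerical inequalities $\lceil\log_2(1+\tfrac{rn}{2})\rceil=s$ and $s\le\tfrac{rn}{(r+1)(r+2)}$ that isolate $(s,t)=(4,1)$. The only cosmetic differences are that the paper first notes $d$ is even (via the all-one parity check) and then rules out weight $\le 4$, whereas you handle $\le 5$ columns directly, and you additionally verify the hypothesis $r\le\tfrac{n}{2}-2$ of Theorem \ref{thm5}, which the paper leaves implicit.
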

\begin{proof}
Since the values of $n,k,r$ can be determined easily, we focus on proving $d \ge 6$.
  Note that the sum of the first $l$ rows of $H$ is an all-one vector, so the minimum distance of $\mathcal{C}$ must be even.
  Therefore it suffices to show that $d \ge 5$.
  Suppose to the contrary that there exists a codeword $\vec{c} \in \mathcal{C}$ such that $H \vec{c}^\tau = 0$, $1 \!\le\! \wt(\vec{c}) \!\le\! 4$.
  Denote $\vec{c} \!=\! (\vec{c}_1,\dots, \vec{c}_l)$, where $\vec{c}_i \in \mathbb{F}_2^{2^t+1}$ for $i \in [l]$.
  It can be deduced from the definition of $L_i$ that $\wt(\vec{c}_i)$ is even, $\forall i \in [l]$.
  So there are at most two nonzero vectors in $\vec{c}_1,\dots, \vec{c}_l$.
  Without loss of generality, we suppose $\vec{c}_3 \!= \!\cdots \!=\! \vec{c}_l \!= \! 0$ and $1 \!\le\!\wt(\vec{c}_1, \vec{c}_2) \!\le\! 4$.
  Then by $H \vec{c}^\tau = 0$ it has $H_1 \vec{c}_1^\tau + H_2 \vec{c}_2^\tau = 0$.
  Denote $\vec{c}_1 = (x_0,x_1,\dots,x_{2^t})$ and $\vec{c}_2 = (y_0,y_1,\dots,y_{2^t})$, we have
  \begin{equation*}
    (x_1 a_1 + \dots + x_{2^t} a_{2^t}) \vec{\beta}_1 + (y_1 a_1 + \dots + y_{2^t} a_{2^t}) \vec{\beta}_2 =0.
  \end{equation*}
  Since $\vec{\beta}_1$ and $\vec{\beta}_2$ are linearly independent over $\mathbb{F}_{2^{2t}}$, it must has $(x_1 a_1 + \dots +  x_{2^t} a_{2^t}) = (y_1 a_1 + \dots + y_{2^t} a_{2^t})=0$, which contradicts to the fact that any $4$ out of $a_1, \dots, a_{2^t} \in \mathbb{F}_{2^{2t}}$ are linearly independent over $\mathbb{F}_2$.

  It remains to show $\mathcal{C}$ is optimal with respect to \eqref{eqBnd5}.
  Setting $n = \frac{2^s-1}{2^t-1}$ and $r = 2^t$, it has $(r+1) \mid n$, then it follows from \eqref{eqBnd5} that
  \begin{equation*}
    k \le \frac{rn}{r+1} - \min\{ \bigl\lceil \log_2(1+\frac{rn}{2}) \bigr\rceil , \bigl\lceil \frac{rn}{(r+1)(r+2)}\bigr\rceil \}.
  \end{equation*}
  We claim $\bigl\lceil \log_2(1+\frac{rn}{2}) \bigr\rceil = s$ and $\bigl\lceil \frac{rn}{(r+1)(r+2)} \bigr\rceil \ge s$ for all $s,t$ satisfying $2t \mid s$, $\frac{s}{2t} \ge 2$ except $s=4,t=1$.
  Then the claim implies that $k \le \frac{rn}{r+1} -s$, and therefore $\mathcal{C}$ is optimal.

  To show $\bigl\lceil \log_2(1+\frac{rn}{2}) \bigr\rceil = s$, note that
\begin{equation*}
  \log_2(1+\frac{rn}{2}) = \log_2(1+\frac{2^t}{2^t-1}\cdot\frac{2^s-1}{2}).
\end{equation*}
Since $1 < \frac{2^t}{2^t-1} \le 2$, we have
\begin{equation*}
  2^{s-1} < 1+\frac{2^t}{2^t-1}\cdot\frac{2^s-1}{2} \le 2^s.
\end{equation*}
It follows that $\bigl\lceil \log_2(1+\frac{rn}{2}) \bigr\rceil = s$.
It remains to show $\bigl\lceil \frac{rn}{(r+1)(r+2)} \bigr\rceil \ge s$.
When $t=1$, it has $s >4$ and
\begin{align*}
  \frac{rn}{(r+1)(r+2)} & = \frac{2^t}{(2^t+1)(2^t+2)} \cdot \frac{2^s-1}{2^t-1} \\
  & = \frac{1}{6} (2^s-1) \\
  & \ge s.
\end{align*}
When $t\ge2$, it has
\begin{align*}
  \frac{rn}{(r+1)(r+2)} & = \frac{2^t}{(2^t+1)(2^t+2)} \cdot \frac{2^s-1}{2^t-1} \\
  & \substack{(a) \\ \ge } \,\, \frac{2^t}{(2^t+1)(2^t+2)} \cdot \frac{1}{2^t-1} \cdot \frac{s(2^{4t}-1)}{4t} \\
  & = \frac{2^t}{2^t+2} \cdot \frac{2^{2t}+1}{4t} \cdot s \\
  & \substack{(b) \\ \ge } \,\, s,
\end{align*}
where (a) holds since $\frac{2^s-1}{s} \ge \frac{2^{4t}-1}{4t}$, which is a consequence of $s \ge 4t$, and (b) holds since $\frac{2^t}{2^t+2} \ge \frac{1}{2}$ and $2^{2t}+1 \ge 8t$ for $t \ge 2.$
\end{proof}

\section{Conclusions}
We introduce the concepts of $\mathcal{L}$-covers and $\mathcal{L}$-spaces for LRCs.
By using the sphere-packing bound in the $\mathcal{L}$-spaces, we derive new upper bounds on the dimension $k$ for binary linear LRCs. Two explicit bounds are given respectively for LRCs with and without the assumption of disjoint local repair groups. Comparing with previously known bounds for LRCs over particular finite fields, our bounds present an explicit form, generalize previous results to general cases, and outperform previous bounds in some cases. Moreover, a class of binary codes attaining our second bound are also designed. A further investigation into the $\mathcal{L}$-spaces are expected to bring more results in studying binary LRCs.

\newpage
\appendices

\section{}\label{appA}
Let $\beta$ be the primitive root of $x^{2^t+1}-1$, and let $M(x)$ denote the minimum polynimial of $\beta$, then $\deg(M(x)) = 2t$.
Define $\mathcal{A}$ to be the binary cyclic code of length $(2^t+1)$ which is generated $(x-1)M(x)$.
Then it can be checked that
\begin{equation*}
  \{\beta^i: i=-2,-1,0,1,2,\}
\end{equation*}
forms a subset of the roots of $(x-1)M(x)$.
It follows from the BCH bound that $\mathcal{A}$ is an $[2^t+1, 2^t-2t, \ge 6]$ binary linear code.
Then an $[2^t, 2^t-2t, \ge 5]_2$ punctured code can be obtained by deleting one coordinate of $\mathcal{A}$, and thus the matrix $A$ can be just chosen as the parity check matrix of the punctured code.

\end{document}